\documentclass[12pt,a4paper]{article}
\usepackage{amsmath}
\usepackage{amssymb}
\usepackage{amsthm}
\usepackage{amsmath}
\usepackage{amssymb}
\usepackage{authblk}
\usepackage[hidelinks]{hyperref}
\usepackage{bbm}
\usepackage{url}
\usepackage{amscd}
\usepackage{mathrsfs}
\usepackage{graphicx}
\usepackage{float}
\usepackage[margin=0.7in]{geometry}\newtheorem{theorem}{Theorem}
\newtheorem{corollary}[theorem]{Corollary}
\newtheorem{lemma}[theorem]{Lemma}

\newtheorem*{remark}{Remark}

\newcommand{\ep}{\varepsilon}
\newcommand{\E}{\mathbb{E}}

\newcommand{\R}{\mathcal{R}}
\newcommand{\C}{\mathcal{C}}

\newcommand{\F}{\mathbb{F}}
\renewcommand{\P}{\mathbb{P}}

\renewcommand{\l}{\ell}

\title{Decoding Error Probability of the Random Code Ensemble over the Erasure Channel} 
 \author[1,2]{Chin Hei Chan}
 
 \affil[1*]{\small{Hetao Institute of Mathematics and Interdisciplinary Sciences, Shenzhen 518017, Guangdong, China}}
 
\affil[2]{\small{Hong Kong University of Science and Technonlogy, Clear Water Bay, Hong Kong, China}}

\affil[ ]{\href{mailto:chenzhanxi@himis-sz.cn}{\texttt{chenzhanxi@himis-sz.cn}}}
\date{}
\begin{document}
\maketitle
\begin{abstract}
In this paper, we provide explicit formulas for the average decoding error probabilities of the ensemble of all $(n,M)_q$ codes over the erasure channel under unambiguous, list and maximum-likelihood decoding principles. Moreover, we derive the asymptotic error exponents for the ensemble of all $(n,q^{Rn})_q$ codes.
\end{abstract}
\begin{keywords}
Decoding error probability, error exponent, random code ensemble, erasure channel, unambiguous decoding, list decoding, maximum likelihood decoding.
\end{keywords}
\section{Introduction}
\subsection{Background}
The binary erasure channel, first introduced by Elias \cite{Elias}, is a discrete memoryless channel in which each information bit is either correctly transmitted or totally erased. While the binary version is widely used in practice, it is natural in a theoretical aspect to extend to the $q$-ary erasure channel, that is, the information input symbols come from the finite field $\F_q$ of $q$ elements.

The decoding problem under the erasure channel has attracted increasing interest in the past few decades, owing to its wide application in the Internet as well as in distributed storage systems to resist random packet losses \cite{Byers,Luby,Lun}. Among the several decoding principles, three are of particular interest: the unambiguous decoding, list decoding, and maximum likelihood decoding. The decoding error probabilities and asymptotic error exponents of particular linear codes or linear code ensembles under these three decoding principles have been extensively investigated (see \cite{Didier,Lemes,SF,Weber,Xiong} and references therein).

In \cite{SF} Shen and Fu computed the decoding error probabilities of some linear codes, and more generally, the average decoding error probabilities of the random linear code ensemble under the three decoding principles by introducing the notion of incorrigible sets. They showed that the incorrigible set distributions are related to the support weight distributions of linear codes. From this, they also derived the asymptotic error exponents of the ensemble for fixed rate under unambiguous decoding as the code length tends to infinity. Xiong \cite{Xiong} further simplified their formulas for the decoding error probabilities, and computed the asymptotic error exponents under fixed-size list and maximum likelihood decoding as well. In \cite{CFX} the authors did the analogous problem for a similar but different code ensemble known as the random parity-check matrix ensemble, by showing that the average incorrigible set distributions are related to the rank distribution of the matrices, a quantity that is simpler than the support weight distributions of linear codes.

In this paper we investigate the decoding performance of the $q$-ary erasure channel under the three decoding principles for general codes, including nonlinear ones. Unlike linear codes which allow one to characterize the incorrigible sets through rank conditions, nonlinear codes lack such algebraic structure. Moreover, they do not have translation symmetry around codewords, which means the incorrigible set distribution depends on the choice of the center codeword as well. This makes computation of such distribution for nonlinear codes intractable.

To circumvent this difficulty, we instead consider the random ensemble of all $(n,M)_q$ codes. Our main contribution is to derive explicit formulas for the average decoding error probabilities of this ensemble under the three different decoding principles, as well as the asymptotic error exponents as code length increases to infinity with fixed rate.

\subsection{Statement of Main Results}

The first result is the average decoding error probabilities of the ensemble of all $(n,M)_q$ codes under the three decoding principles.
\begin{theorem}\label{Exp}
Let $\C$ denote the random ensemble of all $(n,M)_q$ codes and $\ep$ be the probability that a symbol is erased upon transmission through the $q$-ary erasure channel.
\begin{enumerate}
\item The average unsuccessful decoding probability $P_{\mathrm{ud}}(\C,\ep)$ of $\C$ under unambiguous decoding is given by
\begin{equation}\label{Pud}
P_{\mathrm{ud}}(\C,\ep)=\sum_{i=1}^n\left[1-\frac{\binom{q^n-q^i}{M-1}}{\binom{q^n-1}{M-1}}\right]\binom{n}{i}\ep^i(1-\ep)^{n-i};
\end{equation}
\item The average unsuccessful decoding probability $P_{\mathrm{ld}}(\C,L,\ep)$ of $\C$ under list decoding with list size $L$ is given by
\begin{equation}\label{Pld}
P_{\mathrm{ld}}(\C,L,\ep)=\sum_{i=1}^n\sum_{j=L+1}^M\frac{\binom{q^i-1}{j-1}\binom{q^n-q^i}{M-j}}{\binom{q^n-1}{M-1}}\binom{n}{i}\ep^i(1-\ep)^{n-i};
\end{equation}
\item The average decoding error probability $P_{\mathrm{mld}}(\C,\ep)$ of $\C$ under maximum likelihood decoding is given by
\begin{equation}\label{Pmld}
P_{\mathrm{mld}}(\C,\ep)=\sum_{i=1}^n\sum_{L=2}^M\frac{\binom{q^i-1}{L-1}\binom{q^n-q^i}{M-L}(1-L^{-1})}{\binom{q^n-1}{M-1}}\binom{n}{i}\ep^i(1-\ep)^{n-i}.
\end{equation}
\end{enumerate}
\end{theorem}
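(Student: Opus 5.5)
The plan is to condition on the transmitted codeword and on the erasure pattern, and then use the symmetry of the ensemble. Fix a codeword $c$ that is sent, and let the erasure set be $E\subseteq[n]$ with $|E|=i$. This erasure set occurs with probability $\ep^i(1-\ep)^{n-i}$. The receiver sees $c$ restricted to $[n]\setminus E$. The set of codewords consistent with this observation is
\[
\{c'\in C: c'_j=c_j \text{ for all } j\notin E\}.
\]
Its size is $1+N$, where $N$ counts the other codewords lying in the coset $c+V_E$, and $V_E$ is the $q^i$-element set of vectors supported on $E$. The three decoders then fail as follows.
\begin{enumerate}
\item Unambiguous decoding fails exactly when $N\ge 1$.
\item List decoding with list size $L$ fails exactly when $1+N\ge L+1$.
\item Maximum likelihood decoding picks uniformly among the $1+N$ equally likely candidates, so it errs with probability $1-(1+N)^{-1}$.
\end{enumerate}
When $i=0$ there is never an error, which is why all three sums start at $i=1$.

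The key step is to find the distribution of $N$ under the ensemble. I would interpret ``the ensemble of all $(n,M)_q$ codes'' as a uniformly random $M$-subset $C\subseteq\F_q^n$, with the sent codeword chosen uniformly from $C$. Equivalently, pick $c$ uniformly from $\F_q^n$, and then choose the remaining $M-1$ codewords as a uniform $(M-1)$-subset of $\F_q^n\setminus\{c\}$. This equivalence holds by a double-counting of the pairs $(C,c)$ with $c\in C$. Given $c$ and $E$, the coset $c+V_E$ contains $q^i-1$ points other than $c$, and its complement has $q^n-q^i$ points. So $N$ is hypergeometric:
\[
\P(N=j-1)=\frac{\binom{q^i-1}{j-1}\binom{q^n-q^i}{M-j}}{\binom{q^n-1}{M-1}}.
\]
This depends neither on $c$ nor on the particular $E$, only on $i$. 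Summing over the $\binom{n}{i}$ choices of $E$ then produces the binomial weights in all three formulas.

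The three formulas now follow by averaging.
\begin{itemize}
\item For (\ref{Pud}), the failure probability is $1-\P(N=0)=1-\binom{q^n-q^i}{M-1}/\binom{q^n-1}{M-1}$.
\item For (\ref{Pld}), sum $\P(N=j-1)$ over $j=L+1,\dots,M$.
\item For (\ref{Pmld}), compute $\E[1-(1+N)^{-1}]$ by summing over the list size $L=1+N$ from $2$ to $M$. The $L=1$ term contributes zero.
\end{itemize}

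The main obstacle is conceptual rather than computational: the error probability of a nonlinear code depends on which codeword is sent. The approach above handles this by averaging over the sent codeword as part of the ensemble average, and I would state that convention explicitly. The reduction to a uniform $(M-1)$-subset avoiding a fixed point is where the whole computation becomes hypergeometric. It is also where one must check that averaging first over codes and then over codewords agrees with the paper's definition of average error probability. By linearity of expectation the two orders of averaging coincide, so this check is straightforward.
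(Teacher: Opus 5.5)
Your proposal is correct and follows essentially the paper's argument: conditioning on the sent codeword and the erasure set $E$, the number of other codewords in the $q^i$-point set agreeing with the sent word off $E$ is hypergeometric, which is exactly the paper's key lemma computing $\P(\#C(E,m)=j)$. Summing over the $\binom{n}{i}$ sets $E$ and averaging over codewords then gives all three formulas. The only cosmetic difference is in how that probability is counted: the paper fixes an index $m$ in ordered $M$-tuples and simplifies $q^n\binom{M-1}{L-1}[q^i-1]_{L-1}[q^n-q^i]_{M-L}/[q^n]_M$, whereas your reduction to a uniform $(M-1)$-subset of $\F_q^n\setminus\{c\}$ makes the hypergeometric form immediate.
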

These formulas look quite similar with those for the $[n,k]_q$ linear code ensemble in \cite[Theorem 2]{SF} with usual binomial coefficients in place of Gaussian binomial coefficients and size of the summation index set for list size much larger (exponential rather than linear in $n$).

Now let $M=q^{Rn}$ for some $R \in (0,1)$. Here $R=n^{-1}\log_q M$ is the \emph{rate} of the code. The error exponents (in $q$-its) of the random ensemble of all $(n,q^{Rn})_q$ codes under the three decoding principles are given as follows.
\begin{theorem}\label{Errexp}
Let $\C_{n,Rn}$ be the random ensemble of all $(n,q^{Rn})_q$ codes, where the rate $R \in (0,1)$ is fixed and $n \to \infty$, and $\ep$ be the probability that a symbol is erased upon transmission through the $q$-ary erasure channel.
\begin{enumerate}
\item The error exponents $T_{\mathrm{ud}}(\ep)$ and $T_{\mathrm{mld}}(\ep)$ for the average unsuccessful decoding probability of $\C_{n,Rn}$ under unambiguous decoding and maximum likelihood decoding are both given by
\begin{equation}\label{Errud}
T_{\mathrm{ud}}(\ep)=T_{\mathrm{mld}}(\ep)=\begin{cases}
0 &(1-\ep \leq R < 1)\\
(1-R)\log_q\left(\frac{1-R}{\ep}\right)+R\log_q\left(\frac{R}{1-\ep}\right) &\left(\frac{1-\ep}{1-\ep+q\ep} < R < 1-\ep\right)\\
1-R-\log_q(1-\ep+q\ep) &\left(0 < R \leq \frac{1-\ep}{1-\ep+q\ep}\right).
\end{cases}
\end{equation}
\item The error exponent $T_{\mathrm{ld}}(L,\ep)$ for the average unsuccessful decoding probability of $\C_{n,Rn}$ under list decoding with fixed list size $L$ is given by
\begin{equation}\label{Errld}
T_{\mathrm{ld}}(L,\ep)=\begin{cases}
0 &(1-\ep \leq R < 1)\\
(1-R)\log_q\left(\frac{1-R}{\ep}\right)+R\log_q\left(\frac{R}{1-\ep}\right) &\left(\frac{1-\ep}{1-\ep+q^L\ep} < R < 1-\ep\right)\\
L(1-R)-\log_q(1-\ep+q^L\ep) &\left(0 < R \leq \frac{1-\ep}{1-\ep+q^L\ep}\right).
\end{cases}
\end{equation}
\item The error exponent $\widetilde{T}_{\mathrm{ld}}(\lambda,\ep)$ for the average unsuccessful decoding probability of $\C_{n,Rn}$ under list decoding with list size $L=q^{\lambda n}$ ($\lambda < R$) is given by
\begin{equation}\label{Errlde}
\widetilde{T}_{\mathrm{ld}}(\lambda,\ep)=\begin{cases}
0 &(\min\{1-\ep+\lambda,1\}\leq R < 1)\\
(1-R+\lambda)\log_q\left(\frac{1-R+\lambda}{\ep}\right)+(R-\lambda)\log_q\left(\frac{R-\lambda}{1-\ep}\right) &(\lambda < R < \min\{1-\ep+\lambda,  1\}).
\end{cases}
\end{equation}
\end{enumerate}
\end{theorem}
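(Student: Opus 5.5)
Our plan is to start from the exact formulas of Theorem \ref{Exp}. In each, the summand over $i$ factors as a binomial term $\binom{n}{i}\ep^i(1-\ep)^{n-i}$ times a conditional failure probability. The error exponent is then the minimum, over the erased fraction $\theta=i/n\in[0,1]$, of the binomial exponent plus the exponent of that failure probability. Since there are only $n+1$ values of $i$, the sum has the same exponent as its largest term. We therefore use
\[
-\tfrac1n\log_q\!\left[\tbinom{n}{\theta n}\ep^{\theta n}(1-\ep)^{(1-\theta)n}\right]\to D(\theta\|\ep):=\theta\log_q\tfrac{\theta}{\ep}+(1-\theta)\log_q\tfrac{1-\theta}{1-\ep}.
\]
This gives $T=\min_\theta\{D(\theta\|\ep)+F(\theta)\}$, where $F$ is the exponent of the failure probability at erasure fraction $\theta$.

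The next step is to compute $F$ in each case. The factor $\binom{q^i-1}{j-1}\binom{q^n-q^i}{M-j}/\binom{q^n-1}{M-1}$ is a hypergeometric probability: among the $M-1$ other codewords, exactly $j-1$ fall in the coset of size $q^i-1$ that agrees with the transmitted word on the unerased positions. Its mean is $\mu=(M-1)(q^i-1)/(q^n-1)\approx q^{(R+\theta-1)n}$.

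For unambiguous and ML decoding with $\theta>1-R$, we have $\mu\to\infty$, so failure has probability $1-o(1)$ and $F=0$. For ML we use that $1-L^{-1}\ge 1/2$ whenever $L\ge2$.

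For $\theta<1-R$, the failure probability is $\Theta(\mu)$: bound it above by the union bound $\mu$, and below via $1-\binom{q^n-q^i}{M-1}/\binom{q^n-1}{M-1}\ge \mu(1-\mu)$ by inclusion–exclusion (Bonferroni). Hence $F(\theta)=1-R-\theta$.

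For list size $L$ fixed, the probability of at least $L$ other codewords in the coset is $\Theta(\mu^L)$ when $\mu\to0$. The upper bound is $\binom{M-1}{L}\bigl(\tfrac{q^i-1}{q^n-1}\bigr)^L$, and the lower bound comes from the $j=L+1$ term directly. This gives $F(\theta)=L(1-R-\theta)^+$.

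For $L=q^{\lambda n}$, we need a large-deviation estimate showing that the count reaches $q^{\lambda n}\gg\mu$ with doubly-exponentially small probability. To show this we use Chernoff/hypergeometric tail bounds; equivalently, the ratio of consecutive terms $\binom{q^i-1}{j-1}\binom{q^n-q^i}{M-j}$ is about $\mu/j$, which is tiny. This gives $F=\infty$ for $\theta<1-R+\lambda$. For $\theta>1-R+\lambda$, the count concentrates near $\mu\gg L$, so $F=0$.

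Now minimize. The function $D(\cdot\|\ep)$ is convex with minimum $0$ at $\theta=\ep$, and $D(1-R\|\ep)$ is exactly the middle expression in \eqref{Errud}.

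For unambiguous/ML decoding, if $\ep\ge 1-R$ the minimum is $0$. Otherwise we compare $\min_{\theta\ge1-R}D=D(1-R\|\ep)$ against $\min_{\theta<1-R}[D(\theta\|\ep)+1-R-\theta]$. The latter is minimized by the tilted distribution $\theta^*=q\ep/(1-\ep+q\ep)$, giving value $1-R-\log_q(1-\ep+q\ep)$, valid when $\theta^*\le1-R$, i.e.\ $R\le(1-\ep)/(1-\ep+q\ep)$. When $\theta^*>1-R$ the minimum is attained at the boundary $\theta=1-R$. This yields the three regimes in \eqref{Errud}.

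The same computation with slope $L$ gives $\theta^*=q^L\ep/(1-\ep+q^L\ep)$ and \eqref{Errld}. For exponential list size, $T=\min_{\theta\ge1-R+\lambda}D(\theta\|\ep)$, which is $0$ if $\ep\ge1-R+\lambda$ and $D(1-R+\lambda\|\ep)$ otherwise. If $1-R+\lambda>1$, no such $\theta$ exists; the failure region is then empty up to subexponential effects, which the formula's case split handles.

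We expect the main obstacle to be making the hypergeometric estimates uniform in $i$. Three points need care: near the threshold $\theta\approx1-R$, where $\mu$ is of order $1$; the lower bound in the $\theta<1-R$ regime; and the exponential list case, where we need uniform double-exponential tails. We will handle these by continuity of $D$ and by discarding an $o(n)$ window of $i$ around the thresholds.
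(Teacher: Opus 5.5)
Your proposal is correct, and its skeleton is the paper's. Both reduce the sum over $i$ to its largest term, compute a per-erasure-fraction exponent $F$, tie maximum likelihood to unambiguous decoding through $1-L^{-1}\ge\frac12$, and minimize $D(\theta\|\ep)+F(\theta)$.

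The difference lies in how $F$ is obtained. The paper applies Stirling-type asymptotics for $\binom{N}{k}$ with $k=o(N)$ to get $A_{i,1}=\exp[-q^{i-n}M(1+o(1))]$ and an estimate for $A_{i,L}$, then propagates these with the ratio $A_{i,j}/A_{i,j-1}$. You instead read $A_{i,j}$ as a hypergeometric law with mean $\mu_i$ and use only elementary inequalities: the union bound, Bonferroni, $\binom{M-1}{L}p^L\le(e\mu_i/L)^L$, and Chebyshev with variance at most $\mu_i$. This is coarser pointwise but sufficient for exponents, and it is more robust. The bounds $\binom{n}{i}\ep^i(1-\ep)^{n-i}\le q^{-nD(i/n\|\ep)}$ and $\mu_i\le q^{(R+i/n-1)n}$, together with your tail bounds, are exact inequalities. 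So the upper bound on the error probability is uniform in $i$, and no window around the thresholds needs to be discarded.

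The lower bound then needs only one well-chosen $i$ near the minimizer: at $\theta=\ep$, at the tilted point $\theta^*$, or just above the threshold $1-R$ (resp. $1-R+\lambda$), letting the offset tend to $0$. This settles the uniformity in $i$ that the paper's pointwise $(1+o(1))$ asymptotics leave implicit. It also covers the threshold case $t=1-R+\lambda$, which the paper skips.

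You also carry out the minimization explicitly. You use $\min_\theta[D(\theta\|\ep)-L\theta]=-\log_q(1-\ep+q^L\ep)$, attained at $\theta^*=q^L\ep/(1-\ep+q^L\ep)$, whereas the paper defers this step to \cite{CFX}.

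One small correction: your remark about $1-R+\lambda>1$ is vacuous, since $\lambda<R$ forces $1-R+\lambda<1$. The $\min\{1-\ep+\lambda,1\}$ in the statement only records that the zero-exponent regime is empty when $\lambda\ge\ep$.
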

The above formulas show that the code can be decoded efficiently in average under any of the three decoding principles if the code rate $R$ (or, in the exponential list-size regime, the adjusted rate $R-\lambda$) is less than the capacity $1-\ep$ of the erasure channel, which is consistent with Shannon's channel coding theorem \cite{Gallager, SGB}.

Statement 3 of Theorem \ref{Errexp} only deals with $\lambda < R$. Indeed, if $\lambda \geq R$, then the code size itself does not exceed the list size, and so no decoding error can occur at all, which renders study of the error exponent (which is always negative infinity) meaningless.

Moreover, we note that the formulas for the error exponents under unambiguous and maximum likelihood decoding in (\ref{Errud}) are exactly the same as those for the random linear code ensemble $\mathscr{C}_{n,Rn}$ (\cite[Theorem 3]{SF} and \cite[Theorem 1.3]{Xiong}) and random parity-check matrix ensemble $\R_{(1-R)n,n}$ (\cite[Eq. (10)]{CFX}). On the other hand, the formula for the error exponent under fixed-size list decoding in (\ref{Errld}) is in the same form as those for $\mathscr{C}_{n,Rn}$ (\cite[Theorem 1.4]{Xiong}) and $\R_{(1-R)n,n}$ (\cite[Eq. (9)]{CFX}) with $L$ in place of $\l+1$. The list size for those two ensembles is $q^\l$, which is larger than $\l+1$ whenever $q^\l > 2$. Hence the list decoding performance with respect to $\mathcal{C}_{n,Rn}$ is superior to $\mathscr{C}_{n,Rn}$ and $\R_{(1-R)n,n}$ for small rates. To illustrate this more clearly, we have generated a plot (see Figure \ref{Error}) to compare the list decoding error exponents of $\mathcal{C}_{n,Rn}$ and $\mathscr{C}_{n,Rn}$ (or $\mathcal{R}_{(1-R)n,n}$) with list size 4 over the binary erasure channel ($q=2$), with erasure probability $\ep=0.25$.
\begin{figure}[htb!]
\includegraphics[angle=0,width=1.0 \textwidth,height=0.4\textheight]{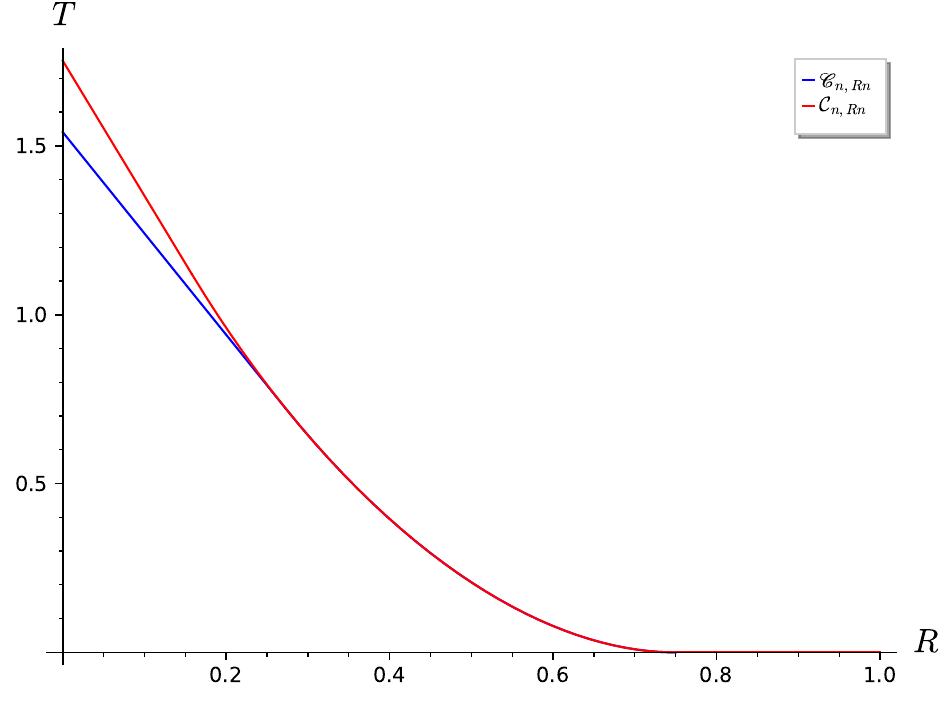}
\caption{$4$-list decoding error exponent of random linear code ensemble $\mathscr{C}_{n,Rn}$ VS random (nonlinear) code ensemble $\mathcal{C}_{n,Rn}$ over the binary erasure channel, with erasure probability $\ep=0.25$.}\label{Error}
\end{figure}


This paper is organized as follows. In Section \ref{Pre}, we describe the three decoding principles in more details, and derive expressions for the decoding error probabilities of a general (nonlinear) code under the three decoding principles. Then in Sections \ref{Pthm1} and \ref{Pthm2} we prove Theorems \ref{Exp} and \ref{Errexp} respectively.
\section{Preliminaries}\label{Pre}
\subsection{Three Decoding Principles}
In this section we recall the three decoding principles of the $q$-ary erasure channel. We first introduce some notations.

First, under the $q$-ary erasure channel, the channel input alphabet is the finite field $\mathbb{F}_q$ of order $q$. Upon transmission, the symbol is either received correctly with probability $1-\ep$ or erased with probability $\ep$, where $0 < \ep < 1$.

For an integer $1 \leq M \leq q^n$, an $(n,M)_q$ code $C$ (not necessarily linear) is simply an $M$-subset of the vector space $\F_q^n$. Write explicitly $C=\{\mathbf{x}_1,\mathbf{x}_2,\cdots,\mathbf{x}_M\}$.

Denote $\mathcal{Y}:=\F_q \cup \{\square\}$, where $\square$ is the erasure symbol, and $[n]:=\{1,2,\cdots,n\}$.

For any $\mathbf{r} \in \mathcal{Y}^n$, define the sets
$$E(\mathbf{r}):=\{\alpha \in [n]: r_\alpha=\square\}$$
and
$$C(\mathbf{r}):=\{\mathbf{c} \in C: c_\alpha=r_\alpha \quad\forall \alpha \in [n] - E(\mathbf{r})\}.$$
Here $C(\mathbf{r})$ is simply the set of all codewords that can result in receiving $\mathbf{r}$, that is, the complete output list upon decoding $\mathbf{r}$ if there is no size limit. Hence for each fixed $\mathbf{c} \in C$ and $\mathbf{r} \in \mathcal{Y}^n$, we have
$$\P(\mathbf{r}\text{ is received}|\mathbf{c}\text{ is sent})=\begin{cases}
\ep^{\#E(\mathbf{r})}(1-\ep)^{n-\#E(\mathbf{r})} &(\mathbf{c} \in C(\mathbf{r}))\\
0 &(\text{otherwise}),
\end{cases}$$
where for any finite set $A$, $\# A$ denotes the cardinality of $A$.

In addition, for any subset $E \subseteq [n]$ and codeword $m$ (identifying with $\mathbf{x}_m$), we define
$$C(E,m):=\{\mathbf{x}_{m'} \in C: x_{m'\alpha}=x_{m\alpha} \ \forall \alpha \in [n]-E\}.$$
Note that $C(E,m) \neq \emptyset$ as it always contains $\mathbf{x}_m$.

The three decoding principles are as follows:
\begin{itemize}
\item In {\bf unambiguous decoding}, the decoder outputs the only one codeword in $C({\bf r})$ if $\#C({\bf r})=1$ and declares ``failure'' otherwise. The unsuccessful decoding probability $P_{\mathrm{ud}}(C,\ep)$ of $C$ under unambiguous decoding is defined to be the probability that the decoder declares ``failure''. 
    
\item In {\bf list decoding}, the decoder with list size $L$ outputs all the codewords in $C(\mathbf{r})$ if $\#C(\mathbf{r})\le L$ and declares ``failure'' otherwise. The unsuccessful decoding probability $P_{\mathrm{ld}}(C,L,\ep)$ of $C$ under list decoding with list size $L$ is defined to be the probability that the decoder declares ``failure''.
    
\item In {\bf maximum likelihood decoding}, the decoder randomly chooses a codeword in $C(\bf{r})$ uniformly and outputs this codeword. The decoding error probability $P_{\mathrm{mld}}(C,\ep)$ of $C$ under maximum likelihood decoding is defined to be the probability that the codeword outputted by the decoder is not the sent codeword. 
\end{itemize}

\subsection{Decoding Error Probability of a General Code}
We now figure out the decoding error probability of an $(n,M)_q$ code $C$ under the three decoding principles.

The unsuccessful decoding probability of codeword $m$ (that is, $\mathbf{x}_m$) under unambiguous decoding is
\begin{align*}
P_\mathrm{ud}(C,m,\ep)&=\sum_{\mathbf{r} \in \mathcal{Y}^n} \P(\#C(\mathbf{r}) > 1, \mathbf{r} \text{ received}|\mathbf{x}_m \text{ sent})\\
&=\sum_{E \subseteq [n]}\P(\#C(E,m) > 1, E|\mathbf{x}_m \text{ sent})\\
&=\sum_{\substack{E \subseteq [n] \\ \#C(E,m) > 1}}\P(E|\mathbf{x}_m \text{ sent})\\
&=\sum_{i=1}^n I_i(C,m)\ep^i(1-\ep)^{1-i},
\end{align*}
where
$$I_i(C,m):=\#\{E \subseteq [n]: \#E=i, \#C(E,m) > 1\}.$$
A set $E \subseteq [n]$ that satisfies the condition $\#C(E,m) > 1$ is called an \emph{incorrigible set of $C$ centered at $m$}. The collection $\{I_i(C,m)\}_{i=1}^n$ is then called the \emph{incorrigible set distribution of $C$ centered at $m$}.

Then the overall unsuccessful decoding probability of $C$ under unambiguous decoding is given by
\begin{align}\label{UD}
P_\mathrm{ud}(C,\ep)&=\frac{1}{M}\sum_{m=1}^M P_\mathrm{ud}(C,m,\ep)\nonumber\\
&=\sum_{i=1}^n I_i(C)\ep^i(1-\ep)^{1-i},\end{align}
where
$$I_i(C):=\frac{1}{M}\sum_{m=1}^M I_i(C,m).$$

The collection $\{I_i(C)\}_{i=1}^n$ is called the \emph{average incorrigible set distribution} of $C$.

Similarly, the unsuccessful decoding probability of codeword $m$ (that is, $\mathbf{x}_m$) under list decoding with list size $L$ is
\begin{align*}
P_\mathrm{ld}(C,L,m,\ep)&=\sum_{\mathbf{r} \in \mathcal{Y}^n} \P(\#C(\mathbf{r}) > L, \mathbf{r} \text{ received}|\mathbf{x}_m \text{ sent})\\
&=\sum_{E \subseteq [n]}\P(\#C(E,m) > L, E|\mathbf{x}_m \text{ sent})\\
&=\sum_{\substack{E \subseteq [n] \\ \#C(E,m) > L}}\P(E|\mathbf{x}_m \text{ sent})\\
&=\sum_{i=1}^n I_i^{[L]}(C,m)\ep^i(1-\ep)^{1-i},
\end{align*}
where
$$I_i^{[L]}(C,m):=\#\{E \subseteq [n]: \#E=i, \#C(E,m) > L\}.$$
A set $E \subseteq [n]$ that satisfies the condition $\#C(E,m) > L$ is called an \emph{$L$-incorrigible set of $C$ centered at $m$}. The collection $\left\{I_i^{[L]}(C,m)\right\}_{i=1}^n$ is then called the \emph{$L$-incorrigible set distribution of $C$ centered at $m$}.

Then the overall unsuccessful decoding probability of $C$ under list decoding is given by
\begin{align}\label{PLD}
P_\mathrm{ld}(C,L,\ep)&=\frac{1}{M}\sum_{m=1}^M P_\mathrm{ld}(C,L,m,\ep)\nonumber\\
&=\sum_{i=1}^n I_i^{[L]}(C)\ep^i(1-\ep)^{1-i},\end{align}
where
$$I_i^{[L]}(C):=\frac{1}{M}\sum_{m=1}^M I_i^{[L]}(C,m).$$

The collection $\{I_i^{[L]}(C)\}_{i=1}^n$ is called the \emph{average $L$-incorrigible set distribution} of $C$.

Note that
\begin{eqnarray}
I_i^{[1]}(C,m)&=&I_i(C,m),\label{I11}\\
I_i^{[1]}(C)&=&I_i(C),\nonumber\\
P_\mathrm{ld}(C,1,m,\ep)&=&P_\mathrm{ud}(C,m,\ep),\nonumber\\
P_\mathrm{ld}(C,1,\ep)&=&P_\mathrm{ud}(C,\ep)\nonumber.
\end{eqnarray}

Finally, for maximum likelihood decoding, we have
\begin{align*}
P_\mathrm{mld}(C,m,\ep)&=\sum_{\mathbf{r} \in \mathcal{Y}^n} \P(\mathbf{x}_{m'} \text{ decoded } \exists m' \neq m, \mathbf{r} \text{ received}|\mathbf{x}_m \text{ sent})\\
&=\sum_{L=2}^M(1-L^{-1})\sum_{\substack{E \subseteq [n] \\ \#C(E,m)=L}} \P(E|\mathbf{x}_m\text{ sent})\\
&=\sum_{i=1}^n\sum_{L=2}^M \lambda_i^{[L]}(C,m)(1-L^{-1})\ep^i(1-\ep)^{n-i}.
\end{align*}
Here
$$\lambda_i^{[L]}(C,m):=I_i^{[L-1]}(C,m)-I_i^{[L]}(C,m)=\#\{E \subseteq [n]: \#E=i, \#C(E,m)=L\}.$$

Therefore the overall decoding error probability is
\begin{equation}\label{MLD}
P_\mathrm{mld}(C,\ep)=\sum_{i=1}^n\sum_{L=2}^M \lambda_i^{[L]}(C)(1-L^{-1})\ep^i(1-\ep)^{n-i}
\end{equation}
where
$$\lambda_i^{[L]}(C):=\frac{1}{M}\sum_{m=1}^M \lambda_i^{[L]}(C,m)=I_i^{[L-1]}(C)-I_i^{[L]}(C).$$
Note that
\begin{eqnarray}
\sum_{j=L+1}^M \lambda_i^{[j]}(C,m)&=&I_i^{[L]}(C,m),\label{IL1}\\
\sum_{j=L+1}^M\lambda_i^{[j]}(C)&=&I_i^{[L]}(C).\nonumber
\end{eqnarray}
\begin{remark}
\begin{enumerate}
\item If $C$ is linear so that $M=q^k$ for some integer $k$, then the quantity $\#C(E,m)$ is independent of the choice of codeword $m$ (and in fact must be a power of $q$), and so are the quantities $I_i^{[L]}(C,m)$ for any $L$. In particular the latter is equal to $I_i^{[L]}(C)$. In this case we simply denote $I_i^{(\l)}(C):=I_i^{[q^\l]}(C)$. Together with $I_i(C)$, both are consistent with the definition in \cite[Section II]{SF}. In addition, we have, for $L \geq 2$,
$$\lambda_i^{[L]}(C)=\begin{cases}
I_i^{(\l-1)}(C)-I_i^{(\l)}(C) &(L=q^\l \ \exists \l \in \mathbb{N})\\
0 &(\text{otherwise}).
\end{cases}$$
Therefore our formulas are simply a generalization of those in \cite[Section II]{SF} to nonlinear codes.
\item In \cite{SF,Xiong}, the authors computed explicitly the incorrigible set distributions by relating to the support weight distributions. This is possible due to the fine algebraic structure of linear codes. However for nonlinear codes there seems no simple way to compute the incorrigible set distribution either centered at a single codeword or in average in general.
\end{enumerate}
\end{remark}
\section{Proof of Theorem \ref{Exp}}\label{Pthm1}
Now denote by $\C$ the random ensemble of choosing $M$ distinct elements $\mathbf{x}_1, \mathbf{x}_2, \cdots, \mathbf{x}_M$ from $\F_q^n$. Note that this ensemble takes each $(n,M)_q$-code with the same multiplicity $M!$, so it is essentially the same as the random $(n,M)_q$-code ensemble.

Denote by $\E$ the expectation over the ensemble $\C$.
\begin{lemma}\label{I}
For any $1 \leq i \leq n, 1 \leq m \leq M$ and $1 \leq L \leq M-1$,
\begin{enumerate}
\item $\E[\lambda_i^{[L]}(C,m)]=\frac{\binom{q^i-1}{L-1}\binom{q^n-q^i}{M-L}}{\binom{q^n-1}{M-1}}\binom{n}{i}$;
\item $\E[I_i^{[L]}(C,m)]=\sum_{j=L+1}^M \frac{\binom{q^i-1}{j-1}\binom{q^n-q^i}{M-j}}{\binom{q^n-1}{M-1}}\binom{n}{i}$;
\item $\E[I_i(C,m)]=\left[1-\frac{\binom{q^n-q^i}{M-1}}{\binom{q^n-1}{M-1}}\right]\binom{n}{i}$.
\end{enumerate}
\end{lemma}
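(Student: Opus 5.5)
The plan is to compute everything by linearity of expectation over the sets $E\subseteq[n]$ of size $i$, conditioning on the codeword $\mathbf{x}_m$. Part 1 is the only real computation; parts 2 and 3 then follow from the identities (\ref{IL1}) and (\ref{I11}).

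\emph{Step 1 (reduction to one set).} Write $\lambda_i^{[L]}(C,m)=\sum_{\#E=i}\mathbbm{1}[\#C(E,m)=L]$. Taking expectations gives
$$\E[\lambda_i^{[L]}(C,m)]=\sum_{\#E=i}\P(\#C(E,m)=L).$$
It therefore suffices to show that, for every fixed $E$ with $\#E=i$,
$$\P(\#C(E,m)=L)=\frac{\binom{q^i-1}{L-1}\binom{q^n-q^i}{M-L}}{\binom{q^n-1}{M-1}}.$$
Summing this over the $\binom{n}{i}$ choices of $E$ then yields part 1.

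\emph{Step 2 (hypergeometric count).} Condition on $\mathbf{x}_m=\mathbf{x}$. In the ensemble $\C$ of ordered $M$-tuples of distinct vectors, the remaining codewords $\{\mathbf{x}_{m'}\}_{m'\neq m}$ then form a uniformly random $(M-1)$-subset of $\F_q^n\setminus\{\mathbf{x}\}$. This holds because each unordered set arises with the same multiplicity $(M-1)!$.

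Next consider the cylinder $S_E(\mathbf{x})=\{\mathbf{y}:y_\alpha=x_\alpha\ \forall\alpha\notin E\}$. It has exactly $q^i$ elements, one of which is $\mathbf{x}$ itself. The event $\#C(E,m)=L$ says that exactly $L-1$ of the other codewords lie in $S_E(\mathbf{x})\setminus\{\mathbf{x}\}$, a set of size $q^i-1$. The remaining $M-L$ codewords must then lie in the complement $\F_q^n\setminus S_E(\mathbf{x})$, which has size $q^n-q^i$.

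Counting such subsets gives exactly the hypergeometric probability displayed in Step 1, and this probability does not depend on $\mathbf{x}$. Averaging over $\mathbf{x}$ therefore changes nothing. The formula is valid for all $1\le L\le M$, including the value $L=M$ that is needed below.

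\emph{Step 3 (parts 2 and 3).} For part 2, apply (\ref{IL1}), namely $I_i^{[L]}(C,m)=\sum_{j=L+1}^M\lambda_i^{[j]}(C,m)$, take expectations, and substitute part 1 term by term.

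For part 3, use (\ref{I11}) to write $I_i(C,m)=I_i^{[1]}(C,m)$, which by part 2 is a sum over $j=2,\dots,M$. The full sum over $j=1,\dots,M$ of the hypergeometric probabilities equals $1$, by Vandermonde's identity
$$\sum_{j}\binom{q^i-1}{j-1}\binom{q^n-q^i}{M-j}=\binom{q^n-1}{M-1}.$$
Hence the sum over $j\ge 2$ equals one minus the $j=1$ term. That term is $\binom{q^n-q^i}{M-1}/\binom{q^n-1}{M-1}$, which gives the stated expression.

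There is no serious obstacle here. The only point needing care is the justification in Step 2 that, conditioned on $\mathbf{x}_m$, the other codewords form a uniform subset of $\F_q^n\setminus\{\mathbf{x}_m\}$, together with the exact count $q^i$ for the cylinder.
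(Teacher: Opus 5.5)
Your proposal is correct and follows essentially the same route as the paper: linearity over the $\binom{n}{i}$ sets $E$, then a hypergeometric count of the other codewords lying in the $q^i$-element cylinder around $\mathbf{x}_m$, then (\ref{IL1}) for part 2 and (\ref{I11}) with total probability one for part 3. The only difference is that the paper counts ordered tuples with falling factorials instead of conditioning on $\mathbf{x}_m$, and your explicit remark that part 1 must also hold at $L=M$ to feed part 2 is a point the paper leaves implicit.
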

\begin{proof}
\begin{enumerate}
\item First, we expand
\begin{equation}\label{I2}
\E[\lambda_i^{[L]}(C,m)]=\sum_{\substack{E \subseteq [n]\\ \#E=i}} \E[\mathbbm{1}_{\#C(E,m) =L}]=\sum_{\substack{E \subseteq [n] \\ \#E=i}} \P(\#C(E,m) = L).
\end{equation}
Now we look at the condition $\#C(E,m)=L$. First the codeword $m$ can be any word in $\F_q^n$. Given this, we need exactly $L-1$ other codewords to have coordinates coincide with $\mathbf{x}_m$ outside $E$, and they can be put in any $L-1$ positions of the code apart from $m$. This gives $\binom{M-1}{L-1}[q^i-1]_{L-1}[q^n-q^i]_{M-L}$ possibilities, where $[N]_t:=\frac{N!}{(N-t)!}$. Overall, we get
$$\P(\#C(E,m)=L)=\frac{q^n\binom{M-1}{L-1}[q^i-1]_{L-1}[q^n-q^i]_{M-L}}{[q^n]_M}=\frac{\binom{q^i-1}{L-1}\binom{q^n-q^i}{M-L}}{\binom{q^n-1}{M-1}}$$
after simplification.

Plugging this into (\ref{I2}) gives the desired result.
\item This follows from combining Statement 1 and (\ref{IL1}).
\item This follows from Statement 2, (\ref{I11}) and the fact that
$$\sum_{j=1}^M\P(\#C(E,m)=j)=1.$$
\end{enumerate}
\end{proof}
Lemma \ref{I} shows that the quantities $\E[\lambda_i^{[L]}(C,m)], \E[I_i^{[L]}(C,m)]$ and $\E[I_i(C,m)]$ are all independent of $m$. Hence by averaging over $m$ with $1 \leq m \leq M$, we have the following result.
\begin{corollary}\label{I0}
\begin{enumerate}
\item $\E[\lambda_i^{[L]}(C)]=\frac{\binom{q^i-1}{L-1}\binom{q^n-q^i}{M-L}}{\binom{q^n-1}{M-1}}\binom{n}{i}$;
\item $\E[I_i^{[L]}(C)]=\sum_{j=L+1}^M \frac{\binom{q^i-1}{j-1}\binom{q^n-q^i}{M-j}}{\binom{q^n-1}{M-1}}\binom{n}{i}$;
\item $\E[I_i(C)]=\left[1-\frac{\binom{q^n-q^i}{M-1}}{\binom{q^n-1}{M-1}}\right]\binom{n}{i}$.
\end{enumerate}
\end{corollary}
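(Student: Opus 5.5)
The plan is to obtain Corollary \ref{I0} directly from Lemma \ref{I} by averaging over the centre codeword and using linearity of expectation. By definition,
$$\lambda_i^{[L]}(C)=\frac{1}{M}\sum_{m=1}^M \lambda_i^{[L]}(C,m),\qquad I_i^{[L]}(C)=\frac{1}{M}\sum_{m=1}^M I_i^{[L]}(C,m),\qquad I_i(C)=\frac{1}{M}\sum_{m=1}^M I_i(C,m).$$
Since the ensemble $\C$ is a fixed finite probability space, each of these is a finite linear combination of random variables. So $\E$ commutes with the average over $m$; for instance
$$\E[\lambda_i^{[L]}(C)]=\frac{1}{M}\sum_{m=1}^M \E[\lambda_i^{[L]}(C,m)].$$

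The key step is to observe that each right-hand side in Lemma \ref{I} depends only on $n,M,q,i,L$ and not on $m$. This reflects the exchangeability of the positions $\mathbf{x}_1,\dots,\mathbf{x}_M$ in the ensemble of ordered $M$-tuples of distinct vectors. Hence every summand in the average equals the common value in Lemma \ref{I}, and averaging $M$ identical terms returns that value. This gives Statements 1, 2 and 3 of the corollary from the corresponding statements of the lemma.

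The only point needing care is the range of $L$. Lemma \ref{I} is stated for $1\le L\le M-1$, while Statement 1 of the corollary will be used with $L$ up to $M$ (for instance in (\ref{MLD})). For $L=M$ I would check directly that the counting argument in the proof of Lemma \ref{I} still applies. With $[q^n-q^i]_{0}=1$, it gives
$$\P(\#C(E,m)=M)=\frac{\binom{q^i-1}{M-1}}{\binom{q^n-1}{M-1}},$$
which is the formula with $L=M$. Apart from this boundary check, the argument is immediate, so I do not expect any real obstacle.
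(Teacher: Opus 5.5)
Your proposal is correct and follows the paper's argument: the paper likewise observes that the expectations in Lemma \ref{I} do not depend on $m$ and obtains the corollary by averaging over $1\le m\le M$ using linearity of expectation. Your separate check that the counting also gives the formula at $L=M$ is a reasonable extra step, since the paper uses that case implicitly (in Statement 2 and in (\ref{MLD})) even though Lemma \ref{I} is stated only for $L\le M-1$.
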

Now we can prove Theorem \ref{Exp}.
\begin{proof}[Proof of Theorem \ref{Exp}]
\begin{enumerate}
\item \textbf{Unambiguous decoding.}

By (\ref{UD}) and Statement 3 of Corollary \ref{I0}, we have
\begin{align*}
P_\mathrm{ud}(\C,\ep)&=\sum_{i=1}^n\E[I_i(C)]\ep^i(1-\ep)^{n-i}\\
&=\sum_{i=1}^n\left[1-\frac{\binom{q^n-q^i}{M-1}}{\binom{q^n-1}{M-1}}\right]\binom{n}{i}\ep^i(1-\ep)^{n-i},
\end{align*}
giving Equation (\ref{Pud}).

\item \textbf{List decoding.}

By (\ref{PLD}) and Statement 2 of Corollary \ref{I0}, we get
\begin{align*}
P_\mathrm{ld}(\C,L,\ep)
&=\sum_{i=1}^n\E[I_i^{[L]}(C)]\ep^i(1-\ep)^{n-i}\\
&=\sum_{i=1}^n\sum_{j=L+1}^M\frac{\binom{q^i-1}{j-1}\binom{q^n-q^i}{M-j}}{\binom{q^n-1}{M-1}}\binom{n}{i}\ep^i(1-\ep)^{n-i},
\end{align*}
which is exactly Equation (\ref{Pld}).

\item \textbf{Maximum likelihood decoding.}

By (\ref{MLD}) and Statement 1 of Corollary \ref{I0}, we obtain
\begin{align*}
P_\mathrm{mld}(\C,\ep)&=\sum_{i=1}^n\sum_{L=2}^M \E[\lambda_i^{[L]}(C)](1-L^{-1})\ep^i(1-\ep)^{n-i}\\
&=\sum_{i=1}^n\sum_{L=2}^M\frac{\binom{q^i-1}{L-1}\binom{q^n-q^i}{M-L}(1-L^{-1})}{\binom{q^n-1}{M-1}}\binom{n}{i}\ep^i(1-\ep)^{n-i}.
\end{align*}
Hence Equation (\ref{Pmld}) follows.
\end{enumerate}
This completes the proof of Theorem \ref{Exp}.
\end{proof}
\section{Proof of Theorem \ref{Errexp}}\label{Pthm2}
Denote by $\C_n$ the random ensemble of choosing $M=q^{Rn}$ distinct elements $\mathbf{x}_1,\mathbf{x}_2,\cdots,\mathbf{x}_{q^{Rn}}$ from $\F_q^n$, where $R \in (0,1)$ is the rate. 

The error exponents of the average decoding error probabilities of $\C_n$ are defined by
\begin{eqnarray*}
T_\mathrm{ud}(\ep)&=&-\lim_{n \to \infty}\frac{1}{n}\log_q P_\mathrm{ud}(\C_n,\ep) \text{ for unambiguous decoding,}\\
T_\mathrm{mld}(\ep)&=&-\lim_{n \to \infty}\frac{1}{n}\log_q P_\mathrm{mld}(\C_n,\ep) \text{ for maximum likelihood decoding,}\\
T_\mathrm{ld}(L,\ep)&=&-\lim_{n \to \infty}\frac{1}{n}\log_q P_\mathrm{ld}(\C_n,L,\ep) \text{ for list decoding with fixed list size $L$ and }\\
\widetilde{T}_\mathrm{ld}(\lambda,\ep)&=&-\lim_{n \to \infty}\frac{1}{n}\log_q P_\mathrm{ld}(\C_n,q^{\lambda n},\ep) \text{ for list decoding with list size $L=q^{\lambda n}$},
\end{eqnarray*}
provided that the limit exists \cite{Gallager, Merhav, SGB, SF, Viterbi}.

In other words, $T_\mathrm{ud}(\ep), T_\mathrm{mld}(\ep), T_\mathrm{ld}(L,\ep)$ and $\widetilde{T}_\mathrm{ld}(\lambda,\ep)$ are non-negative numbers only depending on $q, R, \ep$ (and $L,\lambda$) such that
\begin{eqnarray*}
    P_\mathrm{ud}(\C_n,\ep)&=&q^{-n(T_\mathrm{ud}(\ep)+o(1))},\\
    P_\mathrm{mld}(\C_n,\ep)&=&q^{-n(T_\mathrm{mld}(\ep)+o(1))},\\
    P_\mathrm{ld}(\C_n,L,\ep)&=&q^{-n(T_\mathrm{ld}(L,\ep)+o(1))} \text{ for fixed }L,\\
    P_\mathrm{ld}(\C_n,q^{\lambda n},\ep)&=&q^{-n(\widetilde{T}_\mathrm{ld}(\lambda,\ep)+o(1))} \text{ for fixed }\lambda \in (0,R)
\end{eqnarray*}
respectively.

First, we write
$$A_{i,j}:=\frac{\binom{q^i-1}{j-1}\binom{q^n-q^i}{M-j}}{\binom{q^n-1}{M-1}}.$$
Then we can simplify Equations (\ref{Pud}), (\ref{Pld}) and (\ref{Pmld}) as
$$P_\mathrm{ud}(\C_n,\ep)=\sum_{i=1}^n(1-A_{i,1})\binom{n}{i}\ep^i(1-\ep)^{n-i},$$
$$P_\mathrm{ld}(\C_n,L,\ep)=\sum_{i=1}^n\sum_{j=L+1}^M A_{i,j}\binom{n}{i}\ep^i(1-\ep)^{n-i}$$
and
$$P_\mathrm{mld}(\C_n,\ep)=\sum_{i=1}^n\sum_{L=2}^M A_{i,L}(1-L^{-1})\binom{n}{i}\ep^i(1-\ep)^{n-i}$$
respectively.

First, noting that unambiguous decoding is simply list decoding with list size 1,
and since
$$\frac{1}{2}P_\mathrm{ud}(\C_n,\ep) \leq P_\mathrm{mld}(\C_n,\ep) \leq P_\mathrm{ud}(\C_n,\ep),$$
we can easily derive
$$T_\mathrm{mld}(\ep)=T_\mathrm{ud}(\ep)=T_\mathrm{ld}(1,\ep).$$
Hence we only need to prove explicitly the results for list decoding under both fixed and exponential size regimes, that is, Statements 2) and 3) of Theorem \ref{Errexp}.

By the facts that
$$\max_{i=1}^N a_i \leq \sum_{i=1}^N a_i \leq N\max_{i=1}^N a_i$$
for any finite sequence $(a_i)_{i=1}^N$ of non-negative real numbers and
$$\lim_{n \to \infty} \frac{\log_q n}{n}=0,$$ we obtain
\begin{align}\label{Errld2}
\lim_{n \to \infty}\frac{1}{n}\log_q P_\mathrm{ld}(\C_n,L,\ep)&=\lim_{n \to \infty}\frac{1}{n}\log_q\max_{i=1}^n \left[\left(\sum_{j=L+1}^M A_{i,j}\right)\binom{n}{i}\ep^i(1-\ep)^{n-i}\right]\nonumber\\
&=\sup_{0 < t \leq 1} [F(t)+h(t)+t\log_q\ep+(1-t)\log_q(1-\ep)],
\end{align}
where $t:=\frac{i}{n}, h(t):=-t\log_q t-(1-t)\log_q(1-t)$ is the binary entropy function in $q$-its (see \cite{MacWilliams} and \cite[Lemma 3]{SF}), and
$$F(t):=\lim_{n \to \infty}\frac{1}{n}\log_q\left(\sum_{j=L+1}^M A_{tn,j}\right).$$

We start  by evaluating $A_{i,1}$.

Note that $R \in (0,1)$, so that $M-1=o(q^n-q^i)$ (as long as $i \neq n$) and $M-1=o(q^n-1)$.

Therefore we can write (see \cite{Spen})
$$\binom{q^n-q^i}{M-1} = \left(\frac{(q^n-q^i)e}{M-1}\right)^{M-1}(2\pi(M-1))^{-1/2}\exp\left(-\frac{(M-1)^2}{2(q^n-q^i)}(1+o(1))\right)$$
and similarly
$$\binom{q^n-1}{M-1} = \left(\frac{(q^n-1)e}{M-1}\right)^{M-1}(2\pi(M-1))^{-1/2}\exp\left(-\frac{(M-1)^2}{2(q^n-1)}(1+o(1))\right).$$
This implies
\begin{align*}
A_{i,1}&= \left(\frac{q^n-q^i}{q^n-1}\right)^{M-1}\exp\left[-\frac{(M-1)^2}{2}\left(\frac{1}{q^n-q^i}-\frac{1}{q^n-1}\right)(1+o(1))\right]\\
&=\left(1-\frac{q^i-1}{q^n-1}\right)^{M-1}\exp\left[-\frac{(M-1)^2}{2}\frac{q^i-1}{(q^n-q^i)(q^n-1)}(1+o(1))\right].
\end{align*}
Taking natural logarithm on both sides, we get
\begin{align*}
\ln A_{i,1}&=(M-1)\ln\left(1-\frac{q^i-1}{q^n-1}\right)-\frac{(M-1)^2}{2}\frac{q^i-1}{(q^n-q^i)(q^n-1)}(1+o(1))\\
&=-(M-1)\left[\frac{q^i-1}{q^n-1}+O\left(\left(\frac{q^i-1}{q^n-1}\right)^2\right)\right]-\frac{(M-1)^2}{2}\frac{q^i-1}{(q^n-q^i)(q^n-1)}(1+o(1))\\
&=-q^{i-n}M(1+o(1)).
\end{align*}
Therefore
$$A_{i,1}=\exp[-q^{i-n}M(1+o(1))].$$
Now it is easy to see that, for any $2 \leq j \leq \min\{q^i,M\}$,
\begin{equation}\label{Aratio}
\frac{A_{i,j}}{A_{i,j-1}}=\frac{(q^i-j+1)(M-j+1)}{(j-1)(q^n-q^i-M+j)}=\frac{q^{i-n}M}{j-1}\left(1-\frac{j-1}{q^i}\right)\left(1-\frac{j-1}{M}\right)\left(1-\frac{q^i+M-j}{q^n}\right)^{-1}.
\end{equation}
\subsection{Fixed List-Size Regime}
We first consider the case that $L$ is fixed. We divide into four different cases:

\textbf{Case 0.} $t=1$

In this case the only value for $j$ making the summand nonzero is $M$. For large enough $n$, we have $M = q^{Rn}> L$ and $\sum_{j=L+1}^M A_{n,j}=A_{n,M}=1$.

\textbf{Case 1.} $1-R < t < 1$

In this case $q^{i-n}M=q^{(t-1+R)n} \to \infty$ and we have, by (\ref{Aratio}), for any fixed $j \geq 2$ (that is, independent of $n$),
$$\frac{A_{i,j}}{A_{i,j-1}} \sim \frac{q^{i-n}M}{j-1},$$
so that, by induction, we obtain
\begin{align*}
    \sum_{j=1}^L
A_{i,j} &\sim \sum_{j=1}^L \frac{(q^{i-n}M)^{j-1}}{(j-1)!}\exp[-q^{i-n}M(1+o(1))]\\
    &\leq \frac{L(q^{i-n}M)^{L-1}}{(L-1)!}\exp[-q^{i-n}M(1+o(1))]=o(1).
\end{align*}
Hence
$$\sum_{j=L+1}^M A_{i,j} \sim 1.$$

\textbf{Case 2.} $t=1-R$

In this case $q^{i-n}M=1$, so by (\ref{Aratio}), for any fixed $j \geq 2$,
$$\frac{A_{i,j}}{A_{i,j-1}} \sim \frac{1}{j-1},$$
and
$$\sum_{j=1}^L A_{i,j}\sim \sum_{j=1}^L \frac{1}{(j-1)!}\exp(-1+o(1)) \leq 1-\frac{1}{L!e}+o(1).$$
Hence
$$\sum_{j=L+1}^M A_{i,j} \asymp 1.$$

\textbf{Case 3.} $0 < t < 1-R$

In this case it is clear that
$$\frac{q^i+M-j}{q^n}=o(1).$$
In addition, we have $q^{i-n}M=q^{(t-1+R)n}=o(1)$. Hence by (\ref{Aratio}), for any $2 \leq j \leq \min\{q^i,M\}$,
$$\frac{A_{i,j}}{A_{i,j-1}} \leq \frac{q^{i-n}M}{j-1}(1+o(1))=o(1).$$

By induction, we have
$$A_{i,j} \sim \frac{(q^{i-n}M)^{j-1}}{(j-1)!}\exp[-q^{i-n}M(1+o(1))] \sim \frac{(q^{i-n}M)^{j-1}}{(j-1)!}$$
for fixed $j$.

These imply
$$\sum_{j=L+1}^M A_{i,j} \sim A_{i,L+1} \sim \frac{(q^{i-n}M)^L}{L!}.$$

Combining all cases,
$$F(t)=-\lim_{n \to \infty} \frac{L(n-\log_q M-i)_+}{n}=-L(1-R-t)_+.$$
Putting this into (\ref{Errld2}), we obtain
$$T_\mathrm{ld}(L,\ep)=-\lim_{n \to \infty} \frac{1}{n}\log_q P_\mathrm{ld}(\C_n,L,\ep)=-\sup_{0 < t \leq 1} f(t),$$
where
\begin{equation}\label{f}
f(t)=-L(1-R-t)_++h(t)+t\log_q\ep+(1-t)\log_q(1-\ep).
\end{equation}
Note that this function is in the same form as \cite[Equation (16)]{CFX}, with $L$ in place of $\l+1$.

Hence we can mimic their computation to conclude that
$$T_{\mathrm{ld}}(L,\ep)=\begin{cases}
0 &(1-\ep \leq R < 1)\\
(1-R)\log_q\left(\frac{1-R}{\ep}\right)+R\log_q\left(\frac{R}{1-\ep}\right) &\left(\frac{1-\ep}{1-\ep+q^L\ep} < R < 1-\ep\right)\\
L(1-R)-\log_q(1-\ep+q^L\ep) &\left(0 < R \leq \frac{1-\ep}{1-\ep+q^L\ep}\right).
\end{cases}$$
That is, Equation (\ref{Errld}) holds.
\subsection{Exponential List-Size Regime}
Now we look at the exponential size regime of list decoding, that is, $L=q^{\lambda n}$ for some $\lambda \in (0,R)$. 

We again divide into several cases.

\textbf{Case 0.} $t=1$

This is similar to the fixed list-size regime.

\textbf{Case 1.} $1-R+\lambda < t < 1$

In this case, we have $M-L=o(q^n-q^i)$ and $L-1=o(q^i-1)$, so that (see \cite{Spen})
$$\binom{q^n-q^i}{M-L} = \left(\frac{(q^n-q^i)e}{M-L}\right)^{M-L}(2\pi(M-L))^{-1/2}\exp\left(-\frac{(M-L)^2}{2(q^n-q^i)}(1+o(1))\right)$$
and similarly
$$\binom{q^i-1}{L-1} = \left(\frac{(q^i-1)e}{L-1}\right)^{L-1}(2\pi(L-1))^{-1/2}\exp\left(-\frac{(L-1)^2}{2(q^i-1)}(1+o(1))\right).$$
Hence
\begin{align}\label{AL}
    A_{i,L}&=\frac{(q^i-1)^{L-1}(q^n-q^i)^{M-L}}{(q^n-1)^{M-1}}\frac{(M-1)^{M-1}}{(L-1)^{L-1}(M-L)^{M-L}}\sqrt\frac{M-1}{2\pi(L-1)(M-L)}\times\nonumber\\
    &\quad \exp\left(\frac{(M-1)^2}{2(q^n-1)}(1+o(1))-\frac{(L-1)^2}{2(q^i-1)}(1+o(1))-\frac{(M-L)^2}{2(q^n-q^i)}(1+o(1))\right)\nonumber\\
    &=\left(\frac{q^i-1}{q^n-1}\right)^{L-1}\left(1-\frac{q^i-1}{q^n-1}\right)^{M-L}\left(\frac{M-1}{L-1}\right)^{L-1}\left(1+\frac{L-1}{M-L}\right)^{M-L+1/2}\sqrt{\frac{1}{2\pi(L-1)}}\times\nonumber\\
    &\quad\exp\left(\frac{-q^iM^2+2q^nLM-q^nL^2}{2q^{2n}}(1+o(1))-\frac{L^2}{2q^i}(1+o(1))\right)\nonumber\\
    &=q^{n(L-1)(t-1+R-\lambda)}\left(1-\frac{q^i-1}{q^n-1}\right)^{M-L}\exp\left(\frac{2q^{\lambda n}-q^{n(t+2R-2)}+2q^{n(\lambda+R-1)}-q^{n(2\lambda-t)}}{2}(1+o(1))\right).
\end{align}
Since $t > 1-R+\lambda$, we get
$$t+2R-2>\lambda+R-1>2\lambda-t.$$
Thereby (\ref{AL}) yields
\begin{align*}
\ln A_{i,L}&=n(L-1)(t-1+R-\lambda)+(M-L)\ln\left(1-\frac{q^i-1}{q^n-1}\right)+\left(q^{\lambda n}-\frac{q^{n(t+2R-2)}}{2}\right)(1+o(1))\\
&\sim nq^{\lambda n}(t-1+R-\lambda)-q^{n(t-1+R)}-\frac{q^{n(t+2R-2)}}{2}\\
&\sim -q^{n(t-1+R)}.
\end{align*}
That is,
$$A_{i,L}=\exp[-q^{n(t-1+R)}(1+o(1))].$$
Moreover, by (\ref{Aratio}), for any $2 \leq j \leq L$,

$$\frac{A_{i,j}}{A_{i,j-1}} \geq \frac{q^{i-n}M}{L-1}(1+o(1))=q^{n(t-1+R-\lambda)}(1+o(1)) \to \infty.$$
Hence
$$\sum_{j=1}^L A_{i,j} \leq LA_{i,L}(1+o(1))=\exp[n\lambda \ln q-q^{n(t-1+R)}(1+o(1))]=\exp[-q^{n(t-1+R)}(1+o(1))]=o(1),$$
and so
$$\sum_{j=L+1}^M A_{i,j} \sim 1.$$

\textbf{Case 2.} $t=1-R+\lambda$

In this case we have $t+2R-2=\lambda+R-1=2\lambda-t$ and $t-1+R=\lambda$, so that all apparent leading terms in (\ref{AL}) get canceled. To get the precise limit to $\lim_{n \to \infty} \frac{1}{n} \log_q A_{i,L}$ and hence $\lim_{n \to \infty} \frac{1}{n}\log_q \left(\sum_{i=L+1}^M A_{i,j}\right)$, we need to expand the lower order terms as well. Nevertheless, this value must be non-positive, and it happens that this single value does not affect the error exponent. Therefore we skip this for sake of simplicity.

\textbf{Case 3.} $\lambda < t < 1-R+\lambda$

Then $t+2R-2 < \lambda+R-1 < 2\lambda-t < \lambda$. Hence (\ref{AL}) yields
\begin{align*}
    \ln A_{i,L}&=n(L-1)(t-1+R-\lambda)+(M-L)\ln\left(1-\frac{q^i-1}{q^n-1}\right)+q^{\lambda n}(1+o(1))\\
&\sim -nq^{\lambda n}(1-R+\lambda-t)-q^{n(t-1+R)}+q^{\lambda n}\\
&\sim -nq^{\lambda n}(1-R+\lambda-t).
\end{align*}
That is,
$$A_{i,L}=\exp[-nq^{\lambda n}(1-R+\lambda-t)(1+o(1))].$$
Moreover, by (\ref{Aratio}), for any $L+1 \leq j \leq \min\{q^i,M\}$,
$$\frac{A_{i,j}}{A_{i,j-1}} \leq \frac{q^{i-n}M}{L}(1+o(1))=q^{n(t-1+R-\lambda)}(1+o(1))=o(1).$$
This implies
\begin{align*}
\sum_{j=L+1}^M A_{i,j} &\sim A_{i,L+1}\\
&\sim q^{n(t-1+R-\lambda)}\exp[-nq^{\lambda n}(1-R+\lambda-t)(1+o(1))]\\
&\sim \exp[-nq^{\lambda n}(1-R+\lambda-t)(1+o(1))].
\end{align*}

\textbf{Case 4.} $0 < t \leq \lambda$

In this case we have $A_{i,j}=0$ for $L+1 \leq j \leq M$.

Combining all cases,
$$F(t)\begin{cases}
    =0 &(1-R+\lambda < t \leq 1)\\
    \leq 0 &(t=1-R+\lambda)\\
    =-\infty &(0 < t < 1-R+\lambda).
\end{cases}$$
Putting this into (\ref{Errld2}), we obtain
\begin{align*}
\widetilde{T}_\mathrm{ld}(\lambda,\ep)&=-\lim_{n \to \infty}\frac{1}{n}\log_q P_\mathrm{ld}(\C_n,q^{\lambda n},\ep)\\
&=-\max\left\{F(1-R+\lambda)+g(1-R+\lambda),\sup_{1-R+\lambda < t \leq 1}g(t)\right\}\\
&=-\max_{1-R+\lambda \leq t \leq 1}g(t),
\end{align*}
where
\begin{equation}\label{g}
g(t)=h(t)+t\log_q\ep+(1-t)\log_q(1-\ep),
\end{equation}
since $g$ is continuous over $[1-R+\lambda,1]$.

Differentiating with respect to $t$, we have
$$g'(t)=\log_q\left(\frac{\ep}{t}\right)-\log_q\left(\frac{1-\ep}{1-t}\right).$$
The only critical point is $t_0=\ep$, and $g$ attains local maximum there.
\begin{itemize}
\item If $\lambda < R < \min\{1-\ep +\lambda,1\}$, then $g$ is maximized at $t=1-R+\lambda$, so
$$\widetilde{T}_\mathrm{ld}(\lambda,\ep)=-g(1-R+\lambda)=(1-R+\lambda)\log_q\left(\frac{1-R+\lambda}{\ep}\right)+(R-\lambda)\log_q\left(\frac{R-\lambda}{1-\ep}\right).$$
\item If $\min\{1-\ep +\lambda,1\} \leq R < 1$, then $g$ is maximized at $t=\ep$, so
$$\widetilde{T}_\mathrm{ld}(\lambda,\ep)=-g(\ep)=0.$$
\end{itemize}
This completes the proof of Theorem \ref{Errexp}.
\begin{remark}
    Let $P$ and $\tilde{P}$ be Bernoulli distributions with parameters $\ep$ and $t$ respectively. Then the relative entropy (in $q$-its) from $\tilde{P}$ to $P$ is 
    \begin{equation}\label{D}
    D(\tilde{P}||P)=t\log_q\frac{t}{\ep}+(1-t)\log_q\frac{1-t}{1-\ep}.
    \end{equation}
    Moreover, if $Q$ is the uniform distribution on $\F_q$, then the mutual information $\tilde{I}(X;Y)$ between $X \sim Q$ and $Y$ with $Y|X=x \sim \tilde{P}$ for all $x \in \F_q$ is the capacity of the $q$-ary erasure channel with erasure probability $t$, that is,
    \begin{equation}\label{tI}
        \tilde{I}(X;Y)=1-t.
    \end{equation}
   We also have
   \begin{equation}\label{D2}
       D(\tilde{P}_{Y|X}||P_{Y|X}|Q):=\sum_{x \in \F_q} Q(x)D(\tilde{P}_{Y|X=x}||P_{Y|X=x})=\sum_{x \in \F_q}\frac{1}{q}D(\tilde{P}||P)=D(\tilde{P}||P).
   \end{equation}
    Putting (\ref{D})--(\ref{D2}) into (\ref{f}) and (\ref{g}), we obtain
   $$T_\mathrm{ld}(L,\ep)=\min_{\tilde{P}_{Y|X}}[L(\tilde{I}(X;Y)-R)_++D(\tilde{P}_{Y|X}||P_{Y|X}|Q)]$$
    and
    $$\widetilde{T}_\mathrm{ld}(\lambda,\ep)=\min_{\{\tilde{P}_{Y|X}: \tilde{I}(X;Y)\leq R-\lambda\}} D(\tilde{P}_{Y|X}||P_{Y|X}|Q),$$
    which are exactly the random coding error exponents for fixed-size and exponential-size list decoding derived by Merhav \cite[Section III-B and Section III-C]{Merhav}. That is, Theorem \ref{Errexp} provides an independent, self-contained derivation of Merhav's exponents for the $q$-ary erasure channel. Yet our work has made one further step by providing formulas for the exact average decoding error probability in Theorem \ref{Exp}, which are not present in \cite{Merhav}.
\end{remark}

\section*{Acknowledgments}
This research was supported by the research start-up fund of Hetao Institute of Mathematics and Interdisciplinary Sciences for the sole author. The proofs of Theorem 1 and Statements 1-2 of Theorem 2 (and their associated lemmas) were completed when he was working in Hong Kong University of Science and Technology (HKUST), and he would like to acknowledge the financial support provided by Department of Mathematics of HKUST as well.
\bibliographystyle{IEEEtranS}
\bibliography{CCH-2026}

@article{CFX,
author={Chan, C. H. and Fu, F. and Xiong, M.},
title={Decoding error probability of random parity-check matrix ensemble over the erasure channel},
journal={Des. Codes Cryptogr.},
volume={93},
pages={51-77},
year={2025}
}

@book{Gallager,
author={Gallager, R. G.},
title={Information Theory and Reliable Communication},
address={New York, NY, USA},
publisher={Wiley},
year={1968}
}

@book{MacWilliams,
author={MacWilliams, F. J. and Sloane, N. J. A.},
title={The Theory of Error-Correcting Codes},
address={Amsterdam, the Netherlands},
series={North-Holland Mathematical Library},
volume={16},
publisher={North-Holland Publishing Company},
year={1981}
}

@article{Merhav,
author={Merhav, N.},
title={{List Decoding---Random Coding Exponents and Expurgated Exponents}},
journal={IEEE. Trans. Inform. Theory},
volume={60},
number={11},
pages={6749-6759},
year={2014}
}

@article{SF,
author={Shen, L. and Fu, F.},
title={{The Decoding Error Probability of Linear Codes over the Erasure Channel}},
journal={IEEE. Trans. Inform. Theory},
volume={65},
number={10},
pages={6194-6203},
year={2019}
}

@book{Spen,
author={Spencer, J. and Florescu, L.},
title={Asymptopia},
series={Student Mathematical Library},
volume={71},
address={Providence, RI, USA},
publisher={American Mathematical Society},
year={2014},
pages={66}
}

@book{Viterbi,
author={Viterbi, A. J. and Omura, J. K.},
title={Principles of Digital Communication and Coding},
address={New York, NY, USA},
publisher={McGraw-Hill},
year={1979}
}

@article{Xiong,
author={Xiong, M.},
title={Decoding error probability in the erasure channel and the $r$-th support weight distribution},
journal={Sci. Sin. Math.},
volume={51},
pages={1-14},
year={2021},
note={(in Chinese)}
}

@article{SGB,
author={Shannon, C. E. and Gallager, R. G. and Berlekamp, E. R.},
title={{Lower Bounds to Error Probability for Coding on Discrete Memoryless Channels. I}},
journal={Inf. Control},
volume={10},
number={1},
pages={65-103},
year={1967}
}

@inproceedings{Byers,
author={Byers, J. W. and Luby, M. and Mitzenmacher, M. and Rege, A.},
title={A digital fountain approach to reliable distribution of bulk data},
booktitle={Proc. ACM SIGCOMM Conf. Appli., Technol., Architectures, Protocols Comput. Commun.},
address={Vancouuver, BC, Canada},
year={1998},
pages={56-67}
}

@inproceedings{Luby,
author={Luby, M. and Mitzenmacher, M. and Shokrollahi, A.},
title={Practical loss-resilient codes},
booktitle={Proc. 29th Annual ACM Symp. Theory Comput.},
year={1997},
pages={150-159}
}

@article{Lun,
author={Lun, D. S. and M\'edard, M. and Koetter, R. and Effros, M.},
title={On coding for reliable communication over packet networks},
journal={Phys. Commun.},
volume={1},
number={1},
pages={3-20},
Year={2008}
}

@article{Weber,
author={Weber, J. H. and Abdel-Ghaffar, K. A. S.},
title={Results on parity-check matrices with optimal stopping and/or dead-end set enumerators},
journal={IEEE Trans. Inform. Theory},
volume={54},
number={3},
pages={1412-1418},
year={1991}
}

@article{Didier,
author={Didier, F.},
title={A new upper bound on the block error probability after decoding over the erasure channel},
journal={IEEE Trans. Inform. Theory},
volume={52},
number={10},
pages={4496-4503},
year={2006}
}

@inproceedings{Lemes,
author={Lemes, L. C. and Firer, M.},
title={Generalized weights and bounds for error probability over erasure channels},
booktitle={Proc. Inf. Theory Appl. Workshop (ITA)},
address={San Diego, CA, USA},
year={2014},
pages={14}
}

@inproceedings{Elias,
    author={Elias, P.},
    title={{Coding for Noisy Channels}},
    booktitle={I.R.E. Convention record part 4},
    year={1955},
    pages={37-46}
}
\end{document}